\def\bfh{{\mathbf h}}
\def\bfn{{\mathbf n}}
\def\bfs{{\mathbf s}}
\def\cR{{\mathbf R}}
\def\SINR{\mathrm{SINR}}
\def\cC{{\cal C}}
\def\cB{{\cal B}}
\def\cR{{\cal R}}
\def\E{\mathbb{E}}
\newtheorem{theorem}{Theorem}
\newtheorem{lemma}{Lemma}
\newtheorem{corollary}{Corollary}
\newtheorem{definition}{Definition}
\title{RANDOM ACCESS FOR MASSIVE MIMO SYSTEMS \\ WITH INTRA-CELL PILOT CONTAMINATION}
\name{Elisabeth de Carvalho$^*$, Emil Bj{\"{o}}rnson$^\ddagger$, Erik G. Larsson$^\ddagger$, Petar Popovski$^*$}
\address{$^*$Department of Electronic Systems, Aalborg University,  Denmark \\
$^\ddagger$Department of Electrical Engineering (ISY), Link{\"{o}}ping University, Sweden 
\thanks{This work was performed partly in the framework of the Danish Council for Independent Research (DFF133500273), the Horizon 2020 project FANTASTIC-5G (ICT-671660), the EU FP7 project MAMMOET (ICT-619086), ELLIIT, and CENIIT. The authors would like to acknowledge the contributions of the colleagues in FANTASTIC-5G and MAMMOET, as well as the contribution of Dr. Jesper H. S{\o}rensen to the concepts in the paper.}%
}
\begin{document}
\ninept
\maketitle
\begin{abstract}
Massive MIMO systems, where the base stations are equipped with hundreds of antenna elements, are an attractive way to attain unprecedented spectral efficiency in future wireless networks. 
In the ``classical'' massive MIMO setting, the terminals are assumed fully loaded and a main impairment to the performance comes from the inter-cell pilot contamination, i.e., interference from terminals in neighboring cells using the same pilots as in the home cell. However, when the terminals are active intermittently, it is viable to avoid inter-cell contamination by pre-allocation of pilots, while same-cell terminals use random access to select the allocated pilot sequences. This leads to the problem of \emph{intra-cell} pilot contamination. We propose a framework for random access in massive MIMO networks and derive new uplink sum rate expressions that take intra-cell pilot collisions, intermittent terminal activity, and interference into account. We use these expressions to optimize the terminal activation probability and pilot length.
\end{abstract}
	\begin{keywords}Massive MIMO, random access, pilot collisions.
	\end{keywords}

\section{Introduction}

In massive multiple-input multiple-output (MIMO) systems the base station (BS) has a large number of antennas, which can be used to create statistically stable and strong spatial beams to the terminals, which are in effect hardened communication channels with negligible small-scale fading. The beamforming depends critically on the channel estimation carried out at the BS, based on the pilot signal sent by each of the terminals that intend to communicate with the BS in the uplink (UL) or downlink (DL). The channel estimation process is deteriorated if the transmission of the pilot sequence is interfered by a concurrent transmission from a terminal that uses the same pilot sequence. If the concurrent transmission (or several of them) are coming from terminals associated with different BSs, then collision occurs, which is the well-known pilot contamination problem \cite{Jose2011b}. The main line of work on massive MIMO, starting from \cite{Marzetta2010a}, has assumed that all terminals in a given cell use orthogonal pilots and analyzed the system performance under inter-cell pilot collisions.

In this paper we reverse this ``classical'' assumption and assume that the interference from other cells is negligible, due to natural separation or orthogonal resource allocation. In contrast, we notice that it can happen that two terminals in the same cell choose the same pilot sequence, leading to \emph{intra-cell pilot collision} or \emph{intra-cell pilot contamination}. This is justifiable in scenarios where the terminals have intermittent traffic \cite{BjornsonGC2015}, such that the number of terminals $K$ associated to a BS is much larger than the number that is active at a certain instant. In such a setting, the number of pilot sequences should closely match the expected number of active terminals rather than the total number of terminals $K$. 

This model is relevant in the classical scenario of random access, in which the terminals are not fully loaded with traffic and there is uncertainty at the BS regarding which terminals have data to send at a given time, such that no scheduling can be applied. In the context of the emerging 5G scenarios, the model covers the crowded scenarios (e.g., stadium) and hotspots \cite{METIS_D11_short}. Another emerging scenario associated with this traffic pattern is where a crowd of sensors occasionally and at random time instants want to transmit data to a common access point. Typically, this
transmission is rather insensitive to delays, the rates are low, and
the uplink power budget is extremely limited. Examples include massive sensor
telemetry in IoT and massive M2M in 5G, where many sensors take measurements 
that need be reported to a fusion center. Note that wireless sensor networks often rely on multi-hop transmissions and path diversity to combat fading towards the sink node. The hardened channels of massive MIMO obviate the need for multi-hop transmissions and provides the spatial diversity required to handle massive traffic loads.

In the approach proposed in this paper, the channels are estimated from uplink pilots
every time the terminal (sensor) transmits.  A data codeword is sent over multiple time slots. In each time slot, each active terminal selects (pseudo-)randomly a pilot from a predetermined pilot codebook and, during the rest of the slot, it sends a part of the data codeword. It can be considered that the terminal performs \emph{pilot hopping} over multiple slots and the hopping sequence can be used to identify the terminal and appropriately merge and decode the parts of its codeword at the BS. This approach is suitable for low-power terminals (by virtue of the large array gain 
of a massive array) and is scalable with respect to the number of antennas at the BS. 
Scalability with respect to the number of terminals is determined by the channel coherence (e.g., determined by the mobility and delay spread) and the activity level of the terminals.

While massive MIMO is a fairly mature research topic \cite{Larsson2014a,Huh2012a,Ngo2013a,Bjornson2016b,Hoydis2013a,Bjornson2016a}, the existing results on uplink capacity analysis in the literature \cite{Ngo2013a} assumes full data buffers and are not applicable to the case we study here. Some preliminary results on the effect of intermittent terminal activity can be found in \cite{BjornsonGC2015}. Here, we take this work one step further and consider a full-blown setup that allows for uncoordinated pilot use and hence fully uncoordinated operation. The aspects of random access in massive MIMO have been recently considered in \cite{SDP2014}, where the use of coded access and successive interference cancellation are considered in the context of massive number of antennas.

\section{Random access and system model}

As described in the introduction, there are important practical scenarios where the pilots used in the home cell are not exposed to pilot contamination from other cells. We therefore consider the UL of a single-cell multi-user massive MIMO system with random access from a large set of intermittently active terminals. The BS is equipped with $M$ antennas and can serve a maximal number of $K$ terminals.
%operating in time division duplex (TDD)
%Exploiting channel reciprocity in TDD, the downlink channels are estimated using uplink pilot signaling. 
%Pilots in the home cell are never exposed to pilot contamination from other cells, in the scenarios described in the introduction (\textcolor{red}{sensors, hotspot (stadium), orthogonal pilot assignment across cells}).
%
The channel coherence interval is  $\tau_c$ symbols long.
A total number of $\tau_p$ orthogonal sequences are available, denoted as 
 $\{ \bfs_1 , \bfs_2 , \dots, \bfs_{\tau_p} \}$, where each sequence is $\tau_p$ symbols long and $\tau_p < \tau_c$.
Moreover, we have $K \gg \tau_c$ so the BS does not have the resources to dedicate pilots to particular terminals.
The duration of a UL time slot  $\tau_u$  is smaller or equal to the coherence interval $\tau_c$. 

The structure of a UL transmission frame is displayed in Fig.~\ref{fig:Model3}. 
In each UL time slot, each terminal decides randomly whether or not to transmit. The decision is made independently from the other terminals and the transmission activation probability $p_a$. The terminal selects a pilot sequence uniformly at random from the pool of $\tau_p$ available pilot sequences. Collisions can thus happen in the \emph{pilot domain}, i.e., among contending terminals that send to the same BS. In each UL slot, the pilot phase is followed by a \emph{data phase}, i.e., transmission of a part of a codeword. The whole codeword is sent over multiple slots. 
For an asymptotically large number of time slots, the whole codeword is affected by an asymptotically large number of channel fading realizations, pilot collisions, and interference events. Relying on the ergodicity of such a process, we characterize the performance through a lower bound on the ergodic capacity.

In random access, the BS does not know a priori which terminals that transmit in a given time slot, or which pilot that a terminal has selected in that slot. In principle, the terminals could select the pilot hopping according to a unique, predefined pseudorandom 
%CDMA-like 
sequence, called \emph{pilot-hopping sequence} here. The BS then knows in advance the pilot-hopping sequence of all potential transmitters, such that it can buffer the information from different slots and run a correlation decoder across the slots in order to find out which pilot-hopping sequences have been activated. Here we do not treat the details of such a procedure and leave it for future work. Instead, we assume that the BS can determine exactly when the terminals were active. The main goal of this paper is to establish   a performance bound for such communication systems, and, based on this bound,    optimize   the activation probability $p_a$ and the number of pilot sequences  $\tau_p$ 
for given system parameters, i.e., the total number of terminals $K$, the uplink time slot duration $\tau_u$ and the number of BS antennas $M$.

\begin{figure}[!t]
\centering
\includegraphics[width=8.3cm]{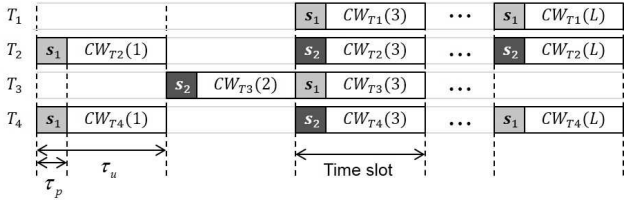} \vspace{-1mm}
\caption{Illustration of the transmission frame. In this example, four terminals $\{T_1,T_2, T_3,T_4\}$ and two mutually orthogonal pilot sequences  $\{s_1,s_2\}$ are considered. Transmission of a codeword is done over multiple channel fades, which enables averaging over noise, channel fades, and pilot collision events.}
\label{fig:Model3}
\end{figure}

A block fading model is adopted where a channel realization is constant across a time slot duration and changes independently from slot to slot.
The channels are narrowband and thus the channel response between the BS and terminal $j$ is described by an $M \times 1$ channel vector $\bfh_j$. The channel realizations are modeled as circularly symmetric complex Gaussian distributed, 
$\bfh_j \sim {\cal C \cal N}(\mathbf{0}, \beta_j \mathbf{I}_M)$. The variance
${\beta_j}$ reflects the path loss, shadowing, received noise power, and the effects of transmit power control at the terminal.
More specifically, statistical power control is performed at the terminals so that $\beta_j$ fluctuates around a nominal value $\overline{\beta}$ according to 
$\beta_j = \overline{\beta} +  v$, where $v$ is modelled as a uniformly distributed random variable between and  $-\alpha \overline{\beta}$ and    
$\alpha \overline{\beta}$, where  $\alpha$ is a constant smaller than 1. 
The normalized $M \times 1$ noise vector $\bfn$ is modelled as 
$\bfn \sim {\cal C \cal N}(\mathbf{0}, \mathbf{I}_M)$, thus the median SNR at each antenna of the BS is $\rho = \overline{\beta}$. 

We use  $(\cdot)^*$,  $(\cdot)^T$, $(\cdot)^H$, $\E[\cdot]$  to denote complex conjugation, transpose, Hermitian transpose, and the expected value of a random variable, respectively.
$\cB_{r,n,p} = (\!\! \begin{array}{c} n \vspace{-1mm}\\ r \end{array}\!\!) p^r (1-p)^{n-r}$ is the probability mass distribution of a binomial distribution with parameters $r$, $n$, $p$.

\section{Lower bound on the Uplink sum rate}

We present three performance expressions that are lower bounds on the ergodic sum rate. The first bound, $\cR_1$, is tight but necessitates Monte-Carlo simulations to be evaluated.
The second  bound, $\cR_2$,
 does not require Monte-Carlo simulation but its tightness depends on the distribution of the parameters $\{\beta_j\}$. 
The third bound, $\cR_3$, is relatively loose, but analytically simple and follows the variations of the ergodic sum rate well.
This bound is used in this paper to optimize the pilot length $\tau_p$ and the activation probability $p_a$.
 
The bounds account for channel estimation errors due to the receiver noise and pilot collisions.
In a given time slot, we assume that the pilot sequence selected by an active terminal $k$ is detected and the channel $\bfh_k$ is estimated using the conventional MMSE estimator \cite{Ngo2013a}. This estimate is used at the BS for maximum ratio combining (MRC) during the data phase. Notice that MRC is an attractive scheme in massive MIMO due to its low computational complexity and near-optimality when $M$ is large \cite{Larsson2014a}.

We denote by $\cC_0$   the set of colliders to one given terminal $0$ (i.e., the active terminals that use the same pilot sequence).
% and by $|\cC_0|$ the cardinality of $\cC_0$. 
The index $0$ is generic and the results do not depend on it. Due to space limitations, we describe the methodology used to derive the bounds without going into the exact details.
\begin{figure*}[t]
%\hline
\begin{equation}
%\underline{\SINR}_k  = 
%\frac{ \tau_p P_u^2(M-1) \beta_0^2 }
%{
%\tau_p P_u^2 (M-1) \sum_{j \in \mathcal{C}_0} \beta_j^2 + 
%%
%P_u\sum_{i \in  \{0, \cC_0  \} }\beta_i (1\! + \!\tau_p P_u \sum_{j \in {\cal C}_i}^{ }  \beta_j )+ 
%%
%( 1\!+ \!P_u \sum_{i \notin \{0, \cC_0 \}} \beta_i ) 
%({1 \!+\!  \tau_p P_u\sum_{i\in {\{k, {\cal C}_k} \}}^{ }  \beta_i } )
%}
\underline{\SINR}_1 = 
\frac{ \tau_p (M-1) \beta_0^2 }
{
\tau_p (M-1) \sum_{j \in \mathcal{C}_0} \beta_j^2 + 
\sum_{i \in  \{0, \cC_0  \} }\beta_i (1 + \tau_p \sum_{j \in {\cal C}_i}^{ }  \beta_j )+ 
( 1+ \sum_{i \notin \{0, \cC_0 \}} \beta_i ) 
({1 +  \tau_p \sum_{i\in {\{0, {\cal C}_0} \}}^{ }  \beta_i } )
}
\label{eq:snrk2}
\end{equation}
\hrule
\end{figure*}
\begin{theorem} \label{theorem:first-theorem}
Assuming MRC at the BS, a lower bound on the ergodic sum rate is
\begin{eqnarray}
\cR_1 = 
%K  \! \sum_{K_a=1}^K  p(K_a) \!  \sum_{c=0}^{K_a-1}   p(c|K_a) \;
%\E_{\beta} \left[ R_1(\cC_0|K_a)  \right]
 \sum_{K_a=1}^K  p(K_a) \, K_a  \sum_{c=0}^{K_a-1}   p(c|K_a) \;
\E_{\beta} \left[ R_1(\cC_0|K_a)  \right]
\label{eq:Rate1}
\end{eqnarray}
where 
\begin{eqnarray}
\E_{\beta} \left[R_1 (\cC_0|K_a) \right] =  \frac{\tau_u - \tau_p}{\tau_u}\, \E_{\beta} \left[ \log_2(1+ \underline{\SINR}_1) \right]
\label{eq:Rk}
\end{eqnarray}
is a lower bound on the ergodic capacity of terminal $0$ conditioned on a collider set $\cC_0$ and $K_a$ active terminals. The expectation is taken with respect to $\beta_j$ and $\underline{\SINR}_1$ is given by
\begin{eqnarray}
%\frac{
%\displaystyle{P_u (M-1) \sigma^2_{\hat{\mathbf{h}}_0} \beta_0^2}
%}
%{
%\displaystyle{
%P_u (M-1) \sigma^2_{\hat{\mathbf{h}}_0}  \sum_{j \in \mathcal{C}_0}  \hspace{-1mm} \beta_j^2 + 
% \beta_0^2 ( P_u \hspace{-3mm} \sum_{j \in  \{k, \cC_0  \} } \hspace{-3mm} \sigma^2_{\epsilon_j}   + 
%P_u  \hspace{-3mm} \sum_{j \notin \{k, \cC_0  \}}  \hspace{-3mm} \beta_j  + 1 ) }.
% }
\frac{
\displaystyle{ (M-1) \sigma^2_{\hat{\mathbf{h}}_0} \beta_0^2}
}
{
\displaystyle{
(M-1) \sigma^2_{\hat{\mathbf{h}}_0}  \sum_{j \in \mathcal{C}_0}  \hspace{-1mm} \beta_j^2 + 
 \beta_0^2 (  \hspace{-2mm} \sum_{j \in  \{0, \cC_0  \} } \hspace{-1mm} \sigma^2_{\epsilon_j}   + 
 \hspace{-1mm} \sum_{j \notin \{0, \cC_0  \}}  \hspace{-1mm} \beta_j  + 1 ) }
 }.
\label{eq:snrk}
\end{eqnarray}
%
% $\E_{\beta} \left[R_0(\cC_0|K_a) \right]$ is a lower bound on the ergodic capacity of terminal $0$ conditioned on a collider set $\cC_0$ and $K_a$ active terminals. 
Note that $p(c|K_a) = \cB_{c, K_a-1,1/\tau_p}$ is the probability of having $c$ colliders to terminal $0$ and on that there are $K_a$ active terminals. $p(K_a)=\cB_{K_a,K,p_a}$ is the probability of having $K_a$ active terminals out of $K$.
$\sigma^2_{\epsilon_j}$ and   $\sigma^2_{\hat{\mathbf{h}}_0}$ are the variance of the channel estimation error and the channel estimate for 
terminal $j$ and $0$. 
%Their expression comes from standard estimation theory. 
\end{theorem}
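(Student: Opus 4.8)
The plan is to follow the standard condition-and-bound route for MRC receivers, paying special attention to the coherent interference created by colliding pilots. First I would write the data-phase received signal as $\mathbf{y} = \sum_{j} \bfh_j x_j + \bfn$, the sum running over the $K_a$ active terminals with $\E[|x_j|^2]=1$. Correlating the received pilot block against the sequence used by terminal $0$ yields a single observation $\mathbf{z} = \sqrt{\tau_p}\sum_{i\in\{0,\cC_0\}}\bfh_i + \tilde{\bfn}$, since exactly the colliders $\cC_0$ reuse that pilot while the orthogonal sequences wash out. The MMSE estimate is then $\hat{\bfh}_0 = \frac{\sqrt{\tau_p}\,\beta_0}{1+\tau_p\sum_{i\in\{0,\cC_0\}}\beta_i}\,\mathbf{z}$, with $\sigma^2_{\hat{\mathbf{h}}_0} = \frac{\tau_p\beta_0^2}{1+\tau_p\sum_{i\in\{0,\cC_0\}}\beta_i}$ and $\sigma^2_{\epsilon_j}=\beta_j-\sigma^2_{\hat{\mathbf{h}}_j}$. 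The decisive structural fact I would emphasise is that every collider is estimated from the same $\mathbf{z}$, so the estimates are collinear, $\hat{\bfh}_j = (\beta_j/\beta_0)\,\hat{\bfh}_0$ for $j\in\cC_0$, while by the orthogonality principle each error $\epsilon_j$ is independent of $\hat{\bfh}_0$.

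Next I would form the MRC output $r = \hat{\bfh}_0^H\mathbf{y}$ and condition on $\hat{\bfh}_0$, writing $\bfh_j = \hat{\bfh}_j+\epsilon_j$ and $g = \|\hat{\bfh}_0\|^2$. The desired term then has deterministic gain $g$ (its fluctuation $\hat{\bfh}_0^H\epsilon_0$ is pushed into the effective noise); each collider contributes a coherent part $(\beta_j/\beta_0)g\,x_j$ plus an incoherent part $\hat{\bfh}_0^H\epsilon_j x_j$; non-colliders and noise are incoherent, with conditional powers $g\beta_j$ and $g$. Treating the aggregate effective noise as the worst-case Gaussian, uncorrelated with $x_0$ (hence giving a valid mutual-information lower bound), the conditional SINR is
\begin{equation}
\SINR \mid \hat{\bfh}_0 = \frac{g}{\sum_{j\in\cC_0}\frac{\beta_j^2}{\beta_0^2}\,g + \big(\sum_{j\in\{0,\cC_0\}}\sigma^2_{\epsilon_j} + \sum_{j\notin\{0,\cC_0\}}\beta_j + 1\big)} .
\end{equation}
Before dividing through by $g$, the signal and the coherent collider terms both scale as $g^2$, whereas all remaining impairments scale as $g$; this is precisely why the collider terms are not washed out by the array gain and is the quantitative signature of intra-cell pilot contamination.

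I would then convert this $g$-random SINR into the deterministic $\underline{\SINR}_1$ via the convexity bound $\E[\log_2(1+\SINR)]\ge \log_2\!\big(1+1/\E[1/\SINR]\big)$, valid because $x\mapsto\log_2(1+1/x)$ is convex. Since $g/\sigma^2_{\hat{\mathbf{h}}_0}$ is Gamma-distributed with shape $M$, its reciprocal moment is $\E[1/g]=1/\big((M-1)\sigma^2_{\hat{\mathbf{h}}_0}\big)$ for $M\ge 2$; substituting this into $\E[1/\SINR]$ and clearing denominators by $(M-1)\sigma^2_{\hat{\mathbf{h}}_0}\beta_0^2$ reproduces \eqref{eq:snrk} exactly. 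Multiplying by the pilot-overhead prelog $(\tau_u-\tau_p)/\tau_u$ and averaging over $\beta$ yields $\E_\beta[R_1(\cC_0\mid K_a)]$ as in \eqref{eq:Rk}.

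Finally I would assemble the sum rate. By symmetry all active terminals have identically distributed rates, so the generic terminal-$0$ bound times $K_a$ gives the conditional sum rate; averaging over the collider count with $p(c\mid K_a)=\cB_{c,K_a-1,1/\tau_p}$ (each of the other $K_a-1$ active terminals independently selects terminal $0$'s pilot with probability $1/\tau_p$) and over the active-set size with $p(K_a)=\cB_{K_a,K,p_a}$ produces \eqref{eq:Rate1}. The main obstacle is the second step: one must correctly separate the coherent, $g^2$-scaling collider interference (which survives the large-$M$ limit) from the incoherent, $g$-scaling estimation-error, inter-user, and noise terms, because misattributing a term changes whether it is divided by $(M-1)$ or not. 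Establishing the collinearity of the contaminated estimates and the resulting SINR structure is the crux; once that is in place, the Gamma reciprocal moment and the combinatorial averaging are routine.
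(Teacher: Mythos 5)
Your proposal is correct and follows essentially the same route as the paper's (very terse) proof: MMSE estimation under pilot collision, MRC, treating the effective interference as worst-case uncorrelated noise, and Jensen's inequality on the convex map $x\mapsto\log_2(1+1/x)$ together with the Gamma reciprocal moment $\E[1/\|\hat{\mathbf{h}}_0\|^2]=1/((M-1)\sigma^2_{\hat{\mathbf{h}}_0})$, exactly as in the cited references. Your substituted expressions for $\sigma^2_{\hat{\mathbf{h}}_0}$ and $\sigma^2_{\epsilon_j}$ also reproduce \eqref{eq:snrk2}, so the details you fill in (collinearity of contaminated estimates, separation of coherent and incoherent interference) are consistent with the theorem as stated.
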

\begin{proof}
The derivation of~(\ref{eq:snrk}) follows~\cite{Ngo2013a,Bjornson2016a} where the essential ingredient is to treat interference as noise and the use of Jensen's inequality on the function $\log_2(1+1/x)$, which allows averaging over the channel fades of the interferers. This bound is tight thanks to channel hardening. The prelog term $p_a$ in~(\ref{eq:Rk})  comes by accounting for the activity probability of terminal $0$. $\cR_1/K$ is a lower bound on the ergodic capacity of any  given terminal. 
\end{proof}

Replacing the expression of $\sigma^2_{\epsilon_j}$ and  $\sigma^2_{\hat{\mathbf{h}}_0}$  in (\ref{eq:snrk}) with the exact expressions from \cite{Ngo2013a} we obtain (\ref{eq:snrk2}) at the top of the page.
Bound $\cR_1$ requires  Monte-Carlo simulations, while the  bound that is derived next
can be computed numerically without the need for Monte-Carlo simulations.
%, but in order to gain further insights and to optimize the key system parameters we obtain further lower bounds.
%{\color{red} Emil: I am not sure if this is 100\% true. We can compute everything exactly except for the expected value over beta, right?} 
  Using again Jensen's inequality on $\log_2(1+1/x)$, a lower bound is obtained by taking the expected value of the denominator in (\ref{eq:snrk2}) w.r.t. a) all sets of contaminators to terminal $0$ , b) the parameter $\beta$ associated to the terminal of interest. 
%We obtain an average rate per user that depends on statistics only, including the statistics of $\beta$.
%
\begin{corollary}
Assuming MRC at the BS, a lower bound on the ergodic sum rate is
\begin{eqnarray}
%\cR_2 = K \sum_{K_a=1}^K  p(K_a) \sum_{c=0}^{K_a-1}   p(c|K_a)   
%R_2(c|K_a) 
\cR_2 = \sum_{K_a=1}^K  p(K_a) \, K_a\sum_{c=0}^{K_a-1}   p(c|K_a)   
R_2(c|K_a) 
\label{eq:Rate2}
\end{eqnarray}
where
\begin{eqnarray}
R_2(c|K_a) =  \frac{\tau_u - \tau_p}{\tau_u}\, \log_2(1+ \underline{\SINR}_2)
\label{eq:Rk1}
\end{eqnarray}
and $\underline{\SINR}_2$ is shown in (\ref{eq:snrb}) at top of the next page. Note that $p(c|K_a)$ and   $p(K_a)$ are defined in Theorem \ref{theorem:first-theorem}. 
\end{corollary}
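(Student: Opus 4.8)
The plan is to deduce the corollary from Theorem~\ref{theorem:first-theorem} rather than re-derive the SINR from scratch, by exhibiting $\cR_2$ as a term-by-term relaxation of $\cR_1$. Since Theorem~\ref{theorem:first-theorem} already certifies that $\cR_1$ lower-bounds the ergodic sum rate, and since (\ref{eq:Rate1}) and (\ref{eq:Rate2}) share the identical outer averaging $\sum_{K_a} p(K_a)\,K_a\sum_{c} p(c|K_a)(\cdot)$ with nonnegative weights, it suffices to prove the inner inequality $R_2(c|K_a) \le \E_\beta[R_1(\cC_0|K_a)]$ for every admissible $(c,K_a)$. Both (\ref{eq:Rk}) and (\ref{eq:Rk1}) carry the same prefactor $(\tau_u-\tau_p)/\tau_u$, so the whole statement collapses to establishing
\[
\log_2\!\bigl(1+\underline{\SINR}_2\bigr)\;\le\;\E_\beta\!\left[\log_2\!\bigl(1+\underline{\SINR}_1\bigr)\right].
\]

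First I would record the convexity fact underlying everything: $\phi(x)=\log_2(1+1/x)$ has $\phi''(x)=\tfrac{1}{\ln 2}\,\tfrac{2x+1}{x^2(x+1)^2}>0$ on $x>0$, so $\phi$ is convex and Jensen gives $\E[\phi(X)]\ge\phi(\E[X])$; the only real question is the right choice of $X$. Writing $\underline{\SINR}_1=N/D$ with $N=\tau_p(M-1)\beta_0^2$ from (\ref{eq:snrk2}), I would condition on the path loss $\beta_0$ of the terminal of interest so that $N$ is deterministic; then $X=D/N$ gives $1/\E[X]=N/\E[D\mid\beta_0]$, and Jensen over the interferer randomness yields precisely the ``expected-denominator'' SINR announced before the corollary. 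Evaluating $\E[D\mid\beta_0]$ is then routine: conditioned on $c$ colliders, the pilot-collision structure pins every set size (exactly $c+1$ active terminals share pilot $\bfs_0$ while $K_a-1-c$ use other pilots, and each nested set $\cC_i$ for $i\in\{0\}\cup\cC_0$ again has $c$ members), so each sum in (\ref{eq:snrk2}) collapses to a count times a moment, with the i.i.d.\ power-control model supplying $\E[\beta_j]=\overline{\beta}$ and $\E[\beta_j^2]=\overline{\beta}^2(1+\alpha^2/3)$. Because the factors $(1+\sum_{i\notin\{0,\cC_0\}}\beta_i)$ and $(1+\tau_p\sum_{i\in\{0,\cC_0\}}\beta_i)$ range over disjoint, independent index sets, their expected product factorizes, and assembling the pieces delivers the closed form (\ref{eq:snrb}).

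The step I expect to be the genuine obstacle is item (b), the averaging over $\beta_0$ itself. The interferer-conditioned expression $\log_2(1+N(\beta_0)/\E[D\mid\beta_0])$ is not a convex function of the single scalar $\beta_0$ in the form a second clean Jensen step would require, because $\beta_0$ enters both $N$ (through $\beta_0^2$) and $\E[D\mid\beta_0]$ (through the collision cross-terms). The pragmatic resolution, matching the corollary's closed form, is to replace $\underline{\SINR}_1$ by $\E_\beta[N]/\E_\beta[D]$ outright; I would then argue that the induced gap between $\E_\beta[\log_2(1+N/D)]$ and $\log_2(1+\E_\beta[N]/\E_\beta[D])$ is controlled by the dispersion of $D$, hence of the $\{\beta_j\}$, which is exactly why the bound is advertised as one whose tightness depends on the distribution of $\{\beta_j\}$. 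To keep a fully rigorous guarantee, I would as a fallback apply Jensen a single time to $X=D/N$ over all randomness at once, obtaining the unconditional lower bound $\underline{\SINR}_2=1/\E_\beta[D/N]$, and verify that under the weak power-control spread this agrees with (\ref{eq:snrb}); that reconciles the analytically convenient formula with a bona fide lower-bound certificate.
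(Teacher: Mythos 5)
Your overall reduction is correct: since $\cR_1$ and $\cR_2$ share the same nonnegative outer weights and the same prelog factor, it suffices to show $\log_2(1+\underline{\SINR}_2)\le\E_\beta[\log_2(1+\underline{\SINR}_1)]$ for each $(c,K_a)$, and the tool is Jensen's inequality applied to the convex map $x\mapsto\log_2(1+1/x)$. However, you have inverted which of your two routes is the actual proof. Your ``fallback'' --- a single application of Jensen to $X=D/N$ with $N=\tau_p(M-1)\beta_0^2$, taken over \emph{all} of the $\beta$'s at once, yielding $\underline{\SINR}_2=1/\E_\beta[D/N]$ --- is precisely the paper's derivation, and it reproduces (\ref{eq:snrb}) \emph{exactly}, not merely ``under weak power-control spread.'' The giveaway is the form of (\ref{eq:snrb}): the numerator is $\tau_p[M-1]$ with no $\overline{\beta^2}$ factor, and every denominator term carries an inverse moment $\overline{\beta^{-1}}$ or $\overline{\beta^{-2}}$; these arise exactly from $\E[D/\beta_0^2]$, with independence of the $\beta_j$'s factorizing each product (e.g.\ the $i=0$ term of the middle sum of (\ref{eq:snrk2}) gives $\E[\beta_0^{-1}]\,(1+\tau_p c\overline{\beta})=\overline{\beta^{-1}}[1+\tau_p c\overline{\beta}]$, which is the second denominator term of (\ref{eq:snrb})). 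Because this one-shot Jensen is already a valid inequality over the joint distribution of all the $\beta$'s, the ``genuine obstacle'' you flag --- non-convexity in $\beta_0$ of the interferer-conditioned expression --- never has to be confronted; no second, separate Jensen step in $\beta_0$ is needed.

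By contrast, the route you present as primary, replacing $\underline{\SINR}_1$ by $\E_\beta[N]/\E_\beta[D]$, is a genuine gap: a ratio of expectations is not a lower bound for $\E_\beta[\log_2(1+N/D)]$ without further argument, and it does not match (\ref{eq:snrb}) --- it would place $\overline{\beta^2}$ in the numerator and no inverse moments in the denominator. A dispersion-based approximation argument would also downgrade the corollary from a bound to an approximation, which is not what is claimed. So: discard the primary route, promote the fallback, and carry out the bookkeeping of $\E[D/\beta_0^2]$; your counting of the collider sets (in particular that $\cC_i=\{0,\cC_0\}\setminus\{i\}$ for $i\in\cC_0$, so that $\beta_0$ reappears inside the cross-terms and produces the $\overline{\beta}\,\overline{\beta^{-1}}$ contributions) is exactly the right bookkeeping and delivers (\ref{eq:snrb}) term by term.
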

In the expression (\ref{eq:snrb}), we have introduced the notations 
$\overline{\beta^2} = \E [\beta^2 ]$, $\overline{\beta^{-1}}=\E [\beta^{-1} ]$, $\overline{\beta^{-2}}=\E [\beta^{-2} ]$ which are assumed to exist. The existence of a closed form expression depends on the  distribution model of $\{\beta_j\}$.
%{\color{red} Emil: Since we have limited ourselves to a certain beta-model, perhaps these can even be obtained in closed form?} 
%
\begin{figure*}[t]
%\hline
\begin{equation}
%\resizebox{\hsize}{!}{
%$\underline{\SINR}_1 \!= \!
%\frac{\tau_p P_u^2\left[M-1\right] }{ \tau_p P_u^2 \left[ M\!-1\right]  c  
%\overline{\beta^2}\,\overline{\beta^{-2}}
%+
%P_u \overline{\beta^{-1}}\left[ 1 \!+ \!\tau_p P_u c \bar{\beta} \right] +  P_u c \overline{\beta} \left[ \overline{\beta^{-2}} +
% \tau_p P_u\overline{\beta^{-1}}  \!+ \!
% \tau_p P_u  \overline{\beta^{-2}} \,\overline{\beta} ( c -1)  \right] +
% \left[  1\!+\! P_u (K_a -c -1)  \overline{\beta} \right]
% \left[ \overline{\beta^{-2}}\! +\!  \tau_p P_u  \overline{\beta^{-1}}\! +\! \tau_p P_u  c  \overline{\beta}  \, \overline{\beta^{-2}}\right]  
%}$
%}
%
\resizebox{.95\hsize}{!}{
$\underline{\SINR}_2 \!= \!
\frac{\tau_p\left[M-1\right] }{ \tau_p  \left[ M\!-1\right]  c  
\overline{\beta^2}\,\overline{\beta^{-2}}
+
 \overline{\beta^{-1}}\left[ 1 +\tau_p c \bar{\beta} \right] +   c \overline{\beta} \left[ \overline{\beta^{-2}} +
 \tau_p \overline{\beta^{-1}}  + 
 \tau_p  \overline{\beta^{-2}} \,\overline{\beta} ( c -1)  \right] +
 \left[  1+(K_a -c -1)  \overline{\beta} \right]
 \left[ \overline{\beta^{-2}} +  \tau_p   \overline{\beta^{-1}}\! +\! \tau_p   c  \overline{\beta}  \, \overline{\beta^{-2}}\right]  
}$
}
\label{eq:snrb}
\end{equation}
\begin{equation}
%\resizebox{.93\hsize}{!}{
%$\underline{\SINR}_2 \!= \!
%\frac{\tau_p P_u^2\left[M-1\right] }{\overline{\beta^{-2}}+ 
%        P_u^2 \left[M-1\right] \left[p_a K - 1\right]\overline{\beta^2} \overline{\beta^{-2}} +
%             2 \left[p_a K - 1 \right] P_u \overline{\beta} \, \overline{\beta^{-2}}\left[  1 -  P_u \overline{\beta} (1- 1/\tau_p)\right]  + 
%                P_u^2 \overline{\beta}^2 \,\overline{\beta^{-2}}  p_a^2 K \left[ K-1\right]    + 
%         P_u  \left[ 1 + P_u (p_a K - 1)\overline{\beta} \, \overline{\beta^{-1}}\right]\left[1+\tau_p\right]
%         }$
%}
%
\resizebox{.95\hsize}{!}{
$\underline{\SINR}_3 \!= \!
\frac{\tau_p \left[M-1\right] }{\overline{\beta^{-2}}+ 
         \left[M-1\right] \left[p_a K - 1\right]\overline{\beta^2}\,\overline{\beta^{-2}} +
             2 \left[p_a K - 1 \right]  \overline{\beta} \, \overline{\beta^{-2}}\left[  1 -  \overline{\beta} (1- 1/\tau_p)\right]  + 
               \overline{\beta}^2 \,\overline{\beta^{-2}}  p_a^2 K \left[ K-1\right]    + 
          \left[ 1 + (p_a K - 1)\overline{\beta} \, \overline{\beta^{-1}}\right]\left[1+\tau_p\right]
         }$
}
\label{eq:snrb2}
\end{equation}
\hrule
\end{figure*}

Next, we present the final sum rate expression used to optimize the parameters $\tau_p$ and $p_a$. 
In the new  bound $\cR_3$, the expectation is taken in the denominator of (\ref{eq:snrb}) w.r.t.~the distribution of the number of contaminators and the number of active terminals.
%The expression of the $\SINR$, denoted as $\underline{\SINR}_2$, is not given here for lack of space but the proof of lemma~\ref{lemma2} indicates how to get it. We give its expression in asymptotic conditions in (\ref{eq:RateAsym}).
The bound $\cR_3$ is relatively loose as compared to $\cR_1$  and $\cR_2$, since it averages over the number of colliders and active terminals in the interference variances.
However, it follows very well their variations and provides very good optimization results, as shown in the numerical results. To evaluate the sum rate, expressions
$\cR_1$  and $\cR_2$ are preferable. 
We denote $\tau_p^o$ and $p_a^o$ as the value of the parameters optimizing  $\cR_3$.
\begin{corollary}
\label{lemma2}
Assuming MRC at the BS, a lower bound on the ergodic sum rate is
\begin{eqnarray}
\cR_3 = p_a K \frac{\tau_u - \tau_p}{\tau_u}\log_2(1+ \underline{\SINR}_3)
\label{eq:Rav2}
\end{eqnarray}
where $\underline{\SINR}_3$ is given in \eqref{eq:snrb2} at the top of the next page.
%$\overline{\SINR}$ is shown in (\ref{eq:snrb2}). 
\end{corollary}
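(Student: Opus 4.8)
The plan is to obtain $\cR_3$ from the already-established $\cR_2$ by two reductions: a mean-field step that extracts the prelog $p_a K$, followed by one more use of Jensen's inequality that collapses the remaining averaging over colliders and active terminals into the denominator of the SINR. Writing (\ref{eq:Rate2})--(\ref{eq:Rk1}) compactly as $\cR_2 = \frac{\tau_u-\tau_p}{\tau_u}\,\E\!\left[K_a\,\log_2(1+\underline{\SINR}_2(c,K_a))\right]$, where the expectation is over $K_a\sim\mathrm{Bin}(K,p_a)$ and $c\,|\,K_a\sim\mathrm{Bin}(K_a-1,1/\tau_p)$, I would first decorrelate the multiplicity $K_a$ from the per-terminal log-rate, replacing $\E[K_a\,\log_2(\cdot)]$ by $\E[K_a]\,\E[\log_2(\cdot)] = p_a K\,\E[\log_2(\cdot)]$. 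This produces exactly the prelog $p_a K\,\frac{\tau_u-\tau_p}{\tau_u}$ appearing in (\ref{eq:Rav2}).

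The analytic core is the same convexity argument used to pass from $\cR_1$ to $\cR_2$. Writing $\underline{\SINR}_2 = \tau_p(M-1)/D(c,K_a)$ with $D$ the denominator of (\ref{eq:snrb}), the map $D\mapsto \log_2\!\left(1+\tau_p(M-1)/D\right)$ is convex on $D>0$, since its second derivative equals $\frac{1}{\ln 2}\big(D^{-2}-(D+\tau_p(M-1))^{-2}\big)>0$. Jensen's inequality then gives $\E[\log_2(1+\tau_p(M-1)/D)]\ge \log_2\!\big(1+\tau_p(M-1)/\E[D]\big)$, which is precisely the bound (\ref{eq:Rav2}) with $\underline{\SINR}_3 = \tau_p(M-1)/\E[D]$. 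It remains to evaluate $\E[D]$.

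The bookkeeping of $\E[D]$ is the step I expect to be the main obstacle. Since $D$ is a quadratic polynomial in $(c,K_a)$ with coefficients assembled from $\overline{\beta^2}$, $\overline{\beta^{-1}}$, $\overline{\beta^{-2}}$ and $\overline{\beta}$, I would compute the required moments from the binomial structure: $\E[c]=\E[K_a-1]/\tau_p=(p_a K-1)/\tau_p$ by the tower rule, $\E[K_a(K_a-1)]=p_a^2 K(K-1)$, and the mixed moments such as $\E[c(K_a-2)]=\E[(K_a-1)(K_a-2)]/\tau_p$ and $\E[c(c-1)]$, again via $\E[c^r\,|\,K_a]$ for the thinned binomial. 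Substituting these and collecting terms should reproduce (\ref{eq:snrb2}); the delicate part is the algebraic grouping that recovers the compact factors $[1+(p_a K-1)\overline{\beta}\,\overline{\beta^{-1}}][1+\tau_p]$ and $2[p_a K-1]\overline{\beta}\,\overline{\beta^{-2}}[1-\overline{\beta}(1-1/\tau_p)]$, since the exact expectation carries lower-order terms (e.g.\ $\E[(K_a-1)(K_a-2)] = p_a^2 K(K-1)-2(p_a K-1)$) that must be absorbed into these groupings. Finally, I would note that the decorrelation step is not itself inequality-preserving---$\mathrm{Cov}(K_a,\log_2(1+\underline{\SINR}_2))<0$---so $\cR_3$ is best regarded as the loosest of the three expressions, to be confirmed against $\cR_1$ and $\cR_2$ numerically, consistent with the discussion preceding the statement.
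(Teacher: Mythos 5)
Your proposal follows essentially the same route as the paper: Jensen's inequality applied to the convex map $D \mapsto \log_2\bigl(1+\tau_p(M-1)/D\bigr)$ to push the expectation over $(c,K_a)$ into the denominator of $\underline{\SINR}_2$, followed by the binomial moment computations $\E[c]=(p_aK-1)/\tau_p$ and $\E[K_a(K_a-1)]=p_a^2K(K-1)$ (the paper performs the same bookkeeping, noting that the $c^2$ terms cancel so only the cross term $c(K_a-2)$ survives). Your added observation that replacing $\E[K_a\log_2(\cdot)]$ by $p_aK\,\E[\log_2(\cdot)]$ is not itself inequality-preserving, since $K_a$ and the per-terminal rate are negatively correlated, is a genuine subtlety that the paper's one-line proof glosses over, and is worth retaining.
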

\begin{proof}
In the denominator of $\underline{\SINR}_2$, we take the expected value w.r.t. the probability mass of the binomial distribution $p(c) = p(K_a)p(c|K_a)$. 
More specifically, we take first the expected value of $c$ conditioned on a number of active terminals $K_a$. It is the average number of contaminators to one given terminal and is equal to $(K_a -1 )/\tau_p$. Then, we take the expected value w.r.t. $K_a$, i.e. the average number of active terminals out of $K$ terminals which is equal to $p_a K$. Hence, 
$\E(c)= (p_a K-1)/\tau_p$. 
In (\ref{eq:snrb}), there are no contributions in $c^2$ as they get cancelled out. 
\end{proof}
%{\color{red} Emil: This proof is harder to understand than the other ones. Maybe you can have a look at it?}
%\textcolor{red}{In this paper, the use of $\cR_3$ for optimization is restricted to the case where }

\section{Sum Rate Scaling Laws}
\label{sec:Scale}

%\begin{eqnarray}
%%\frac{\tau_p P_u^2 (M-1) \beta_0^2}
% P_u^2 (M-1)   (p_a K-1) \overline{\beta^2} + 
%%
% P_u \beta_0 (1 +  P_u (p_a K-1)  \overline{\beta}   )  + 
%%
% P_u (p_a K-1)/\tau_p \overline{\beta} (1+ \tau_p P_u \beta_0)\\
%%
% P_u^2  K p_a^2 (K  - 1)/\tau_p \overline{\beta^2} +
%%
% (1 + P_u (p_a K-1) (1-1/\tau_p) \overline{\beta} ) (1 + \tau_p P_u \beta_0)+\\
%%
%(p_a K-1)\ P_u \overline{\beta}     + P_u^2 (K p_a - 1)(K p_a - 2)  (1-1/\tau_p) \overline{\beta}^2
%\end{eqnarray}

Next, we use $\cR_3$ in (\ref{eq:Rav2}) in order to obtain scaling laws and heuristic parameter selection. Consider asymptotic conditions where $p_a K \gg 1$, $M \gg 1$, $\tau_u \gg 1$ and $\tau_p \gg 1$, which are of interest in massive MIMO systems with a high user load that can lead to pilot collisions. 
{An additional condition is $\overline{\beta^{-2}} \;\overline{\beta}^2 \approx 1$, which we assume in the rest of the paper.}
Keeping the dominant terms in (\ref{eq:snrb2}), $\underline{\SINR}_3$ is approximated as:
\begin{eqnarray}  
\underline{\SINR}_a =     
%\frac{M \tau_p P_u^2}{
%        P_u^2 \overline{\beta^2}\, \overline{\beta^{-2}}M  K p_a +
%                P_u^2 \overline{\beta}^2 \overline{\beta^{-2}} p_a^2 K^2     +
%         P_u^2 \overline{\beta} \, \overline{\beta^{-1}}p_a K \tau_p } = 
         \frac{M \tau_p }{
        \overline{\beta^2} \, \overline{\beta^{-2}} M  K p_a +
                \overline{\beta}^2 \overline{\beta^{-2}} p_a^2 K^2     +
         \overline{\beta}\,  \overline{\beta^{-1}}p_a K  \tau_p }.
\label{eq:RateAsym}
\end{eqnarray}
The corresponding sum rate expression gives insights into how the sum rate depends on the various parameters. 
Simulations show that the maximum of the sum rate strongly depends  on the term  $\overline{\beta}^2 \overline{\beta^{-2}} p_a^2 K^2$ in the denominator and much less significantly on the other terms. The heuristic solution presented next is obtained based on this observation. 
\begin{definition}
We define $\tau_p^h$ and $p_a^h$ as
\vspace{-1mm}
\begin{eqnarray}   
 {\tau_p^h}= \frac{\tau_u}{3}\quad \quad
 p_a^h K = \sqrt{\frac{\tau_u M}{3 s_o \overline{\beta}^2 \overline{\beta^{-2}} }}
\label{eq:Solheu}
\end{eqnarray}
where $s_0\approx 3.92$ is the solution of $\log(1+x) =2 \frac{x}{1+x}$.  The associated sum rate is equal to
\vspace{-1mm}
\begin{eqnarray}   
%\sqrt{\tau_u M}\frac{2\sqrt{3} }{ \sqrt{\overline{\beta}^2 \overline{\beta^{-2}}} }
\cR_a^h = 
 \sqrt{\frac{\tau_u M}{3 s_o \overline{\beta}^2 \overline{\beta^{-2}} }}\frac{2}{3} \log_2\left(  1+ \underline{\SINR}_a^h  \right).
\end{eqnarray}

\vspace{-1mm}
\noindent
where  $\underline{\SINR}_a^h$ is the value of  $\underline{\SINR}_a$ in (\ref{eq:RateAsym}) at $(\tau_p^h, p_a^h)$.
\end{definition}
\begin{proof}
We look for the expression of $\tau_p$ and $p_a K$ maximizing the following rate function: 
\vspace{-2mm}    
\begin{eqnarray}  
R^h =  p_a K (\tau_u - \tau_p)
 \log_2(1+X),  \quad
X =   \frac{M \tau_p }{
                \overline{\beta}^2 \overline{\beta^{-2}} p_a^2 K^2    }.
\end{eqnarray}

\vspace{-2mm}
\noindent 
The partial derivative of $R^h$ are
\vspace{-2mm}
\begin{eqnarray}     
\left\{ 
\begin{array}{lcl}
\frac{\partial R^h}{\partial  \tau_p} & =&
-  
\log_2(1+X) 
+
(\tau_u - \tau_p)
\frac{\partial X}{\partial  \tau_p}
\frac{1}{1+X},  \\
\frac{\partial R^h}{\partial p_a} & =& \log_2(1+X) 
+ p_a 
\frac{\partial X}{\partial  p_a}
\frac{1}{1+X}.
\end{array}
\right.
\end{eqnarray}

\vspace{-1mm}
\noindent
Noting that 
$
p_a  \frac{\partial X}{\partial  p_a} = 
-2 {X} $ and 
$
 \frac{\partial X}{\partial \tau_p} = 
\frac{1}{\tau_p} {X}
$, we obtain

\vspace{-3mm}
\begin{eqnarray}     
\left\{ 
\begin{array}{l}
1+\frac{1+X}{X} \log(1+X) =
 \frac{\tau_u}{\tau_p} , \\
\frac{1+X}{X}   \log(1+X) 
=2. \\
\end{array}
\right.
\end{eqnarray}
\vspace{-2mm}
From those equations, we obtain (\ref{eq:Solheu}).
\end{proof}

Based on this heuristic parameter selection, we obtain the following scaling behaviors.
\begin{lemma}
\label{lemma3}
Assuming $\tau_u \gg 1$, $\tau_p \gg 1$,  $M \gg 1$ and $p_a K \gg 1$,  the following asymptotic results hold for 
$\tau_p^h$ and $p_a^h K$ in (\ref{eq:Solheu}):
\begin{enumerate}
\item $M \gg \tau_u$: $\underline{\SINR}_a^h$ scales as  $\sqrt{\tau_u/M}$ and $\cR_a^h$ scales as  $\tau_u$.
\item $M \ll \tau_u$: $\underline{\SINR}_a^h$ scales as  $\sqrt{M/\tau_u}$ and $\cR_a^h$ scales as $M$.
\item $M \sim \tau_u$: $\underline{\SINR}_a^h$ tends to a constant value 
 and $\cR_a^h$ scales as  $\sqrt{\tau_u M}$.
\end{enumerate}
\end{lemma}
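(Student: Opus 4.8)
The plan is to substitute the heuristic parameters from \eqref{eq:Solheu} into the asymptotic SINR \eqref{eq:RateAsym} and perform a term-by-term order-of-magnitude comparison in each of the three regimes, treating the $\beta$-moments as fixed $O(1)$ constants. Writing $p_a^h K = C\sqrt{\tau_u M}$ with $C=(3 s_o \overline{\beta}^2\overline{\beta^{-2}})^{-1/2}$ and $\tau_p^h=\tau_u/3$, the numerator $M\tau_p$ scales as $M\tau_u$, while the three denominator terms of \eqref{eq:RateAsym} scale as
\[
D_1 \sim M^{3/2}\tau_u^{1/2},\qquad D_2\sim M\tau_u,\qquad D_3\sim M^{1/2}\tau_u^{3/2},
\]
arising from the $MKp_a$, the $p_a^2K^2$, and the $p_aK\tau_p$ contributions respectively. (Under the standing assumption $\overline{\beta}^2\overline{\beta^{-2}}\approx1$ the middle term is in fact $D_2=\tau_u M/(3 s_o)$ exactly, the $\beta$-factors cancelling against $C^2$.) The entire argument then reduces to deciding which of $D_1,D_2,D_3$ dominates in each regime.

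First I would dispatch the two extreme regimes. For $M\gg\tau_u$ the ratios $D_1/D_2=\sqrt{M/\tau_u}\gg1$ and $D_1/D_3=M/\tau_u\gg1$ show that $D_1$ dominates, so $\underline{\SINR}_a^h \approx (M\tau_u)/D_1 \propto \sqrt{\tau_u/M}$, giving regime~1. Symmetrically, for $M\ll\tau_u$ the ratios $D_3/D_2=\sqrt{\tau_u/M}\gg1$ and $D_3/D_1=\tau_u/M\gg1$ show $D_3$ dominates, yielding $\underline{\SINR}_a^h\propto\sqrt{M/\tau_u}$, which is regime~2. In the balanced case $M\sim\tau_u$, all three denominator terms together with the numerator scale identically as $M^2\sim\tau_u^2$, so $\underline{\SINR}_a^h$ converges to a finite, strictly positive constant fixed by $C$ and the $\beta$-moments, establishing regime~3.

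The rate statements then follow by feeding these SINR scalings into $\cR_a^h = \tfrac{2}{3}\,C\sqrt{\tau_u M}\,\log_2(1+\underline{\SINR}_a^h)$, where the prelog factor scales as $\sqrt{\tau_u M}$. In regimes~1 and~2 the SINR vanishes, so I would use $\log_2(1+x)=x/\ln 2 + o(x)$: multiplying $\sqrt{\tau_u M}$ by $\sqrt{\tau_u/M}$ gives $\cR_a^h\sim\tau_u$, and by $\sqrt{M/\tau_u}$ gives $\cR_a^h\sim M$. In regime~3 the log term is a positive constant, so $\cR_a^h$ inherits the $\sqrt{\tau_u M}$ scaling of the prelog directly.

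The only real subtlety is the bookkeeping of the handoff: the dominant denominator term passes from $D_1$ to $D_3$ as the ratio $M/\tau_u$ sweeps from large to small, with $D_2$ never strictly dominant but becoming comparable to the other two precisely at the crossover $M\sim\tau_u$ — which is exactly what produces the constant-SINR plateau of regime~3. I expect the care to be needed in verifying that no term was dropped prematurely, in particular that the cross-terms collected in $D_2$ really do match $D_1$ and $D_3$ at balance, rather than in any single estimate, since each pairwise comparison is merely a ratio of monomials in $M$ and $\tau_u$.
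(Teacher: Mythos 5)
Your proposal is correct and matches the paper's approach: the paper's proof is simply ``substituting the expressions \eqref{eq:Solheu} in \eqref{eq:RateAsym} leads to those asymptotic results,'' and your term-by-term comparison of $D_1\sim M^{3/2}\tau_u^{1/2}$, $D_2\sim M\tau_u$, $D_3\sim M^{1/2}\tau_u^{3/2}$ is exactly the bookkeeping that substitution entails. You have merely made explicit the details (dominant-term identification in each regime and the small-$x$ expansion of $\log_2(1+x)$) that the paper leaves to the reader.
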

\begin{proof}
Substituting the expressions (\ref{eq:Solheu}) in  (\ref{eq:RateAsym})  leads to those asymptotic results. 
%\begin{eqnarray}      
%         \frac{M \frac{1}{3}\tau_u }{
%        \overline{\beta^2} \, \overline{\beta^{-2}} M  \sqrt{\frac{ \tau_u M}{3 s_o \overline{\beta}^2  \overline{\beta^{-2}} }} +
%                \overline{\beta}^2 \overline{\beta^{-2}} {\frac{ \tau_u M}{3 s_o \overline{\beta}^2  \overline{\beta^{-2}}}}     +
%         \overline{\beta}\,  \overline{\beta^{-1}}\sqrt{\frac{ \tau_u M}{3 s_o \overline{\beta}^2 \overline{\beta^{-2}} }}  \frac{1}{3}\tau_u }
%\end{eqnarray}
\end{proof}

The significance of the heuristic solution is two-fold. First, this solution gives a sum rate that is close to the optimal sum rate, which will be illustrated in Section \ref{section:numerical}. Second, it provides quasi-optimal variation laws in all asymptotic regimes. In simulations, a dependence of $\tau^o_p$ on $M$ can be observed but it is weak. Furthermore, it is natural to model the dependence of 
$\tau_p^o$ on $\tau_u$ only and not on $M$ to comply with the constraint $\tau_p^o \leq \tau_u$. 
Examining the laws of variations   $\tau_p \sim O(\tau_u^a)$ and $p_a K \sim O(M^b \tau_u^c)$,
one can easily show  that the choice $a=1, b=c=1/2$ leads to the best scaling laws. 

When $\tau_u$ is the smaller quantity, the average number of active terminals  and the sum rate is limited by $\tau_u$. 
When $M$ is the smaller quantity, their number is limited by $M$. 
When $M$ and $\tau_u$ are comparable, the optimal number of pilot sequences and average number of active terminals  becomes comparable as well. 
In the first two asymptotic modes of Lemma~\ref{lemma3}, the rate of each terminal becomes asymptotically small but the average number of active terminals that the system can accommodate grows faster. 
In the third mode, the rate of each terminal becomes constant while the average number of active terminals increases. The quality of service requirement should dictate which values of $M$ and $\tau_u$ should be selected. 
Note that, the system functions in regimes where the number of average number of active terminals is of same order of number of antennas.

\section{Numerical Results}
\label{section:numerical}

In this section, we illustrate the behavior of the 3 performance bounds as well as the optimal and heuristic solutions. 
The SNR $\rho = \overline{\beta}$ is fixed to 10dB. 
The variation parameter of $\beta_j$  around $\overline{\beta}$ is set to  $\alpha= 0.25$.
The total number of terminals $K$ is equal to 800. 

Fig.~\ref{fig1} displays the variations of $(\tau_p^o, p_a^o K)$  and $(\tau_p^h, p_a^h K)$ as a function of $\tau_u$.
The number of antennas is $M=100$ and $M=400$. We can see that $\tau_p^o$ and $\tau_p^h$ follow a linear variation w.r.t. $\tau_u$, 
but the linear coefficient slightly depends on the value of $M$ for $\tau_p^o$. $p_a^o K$ and $p_a^h K$ both scales as $\sqrt{\tau_u}$. 
For $M=100$, the average number of active terminals is limited by the number of antennas and is smaller than $\tau_p^o$. When $M$ and $\tau_u$ are comparable, the optimal regime involves a comparable number of active terminals and number of pilot sequences. 
Particularly for large values of $M$, the offset between optimal and heuristic solutions becomes large, while the sum rate exhibits a small difference as shown in Fig.~\ref{fig2}. This comes from the fact that the region around the optimal solution is quite flat so that such an offset does not have a significant impact.  

In Fig.~\ref{fig2}, the performance bounds $\cR_1$, $\cR_2$ and $\cR_3$ are displayed for $M=100$  and $M=400$. Bound $\cR_1$ and $\cR_2$ are almost superposed for $M=100$ while a small gap is visible for $M=400$. 
 A large gap between $\cR_3$ and the other bounds can be observed.
This gap comes
 from the large variations of $\underline{\SINR}_2$ in (\ref{eq:snrb}) w.r.t. the collision events that are averaged out in the denominator of $\underline{\SINR}_2$ to get $\cR_3$. At last, looking at bound $\cR_1$, we see that the heuristic solution exhibits excellent performance.

\begin{figure}[!h]
\centering
\includegraphics[width=6.8cm]{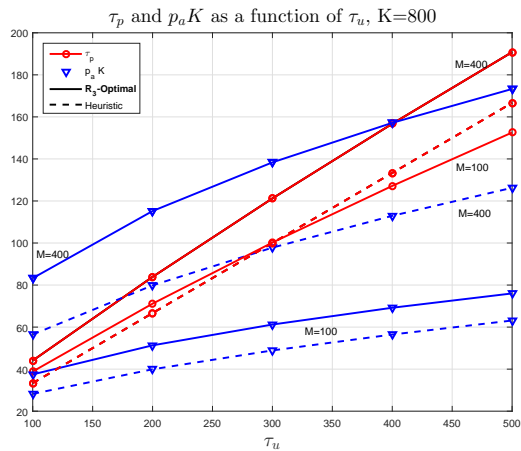}
\vspace{-2mm}
\caption{Optimal $\tau_p^o$ and $p_a^o K$ and heuristic $\tau_p^h$ and $p_a^h K$ as a function of $\tau_u$ for M=100 and M=400.
}
\label{fig1}
\end{figure}

\begin{figure}[!h]
\centering
\hspace{0mm}
\includegraphics[width=6.8cm]{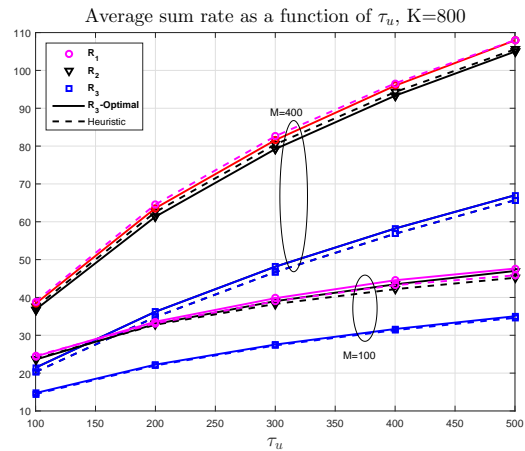}
\vspace{-2mm}
\caption{Performance bounds $\cR_1$,  $\cR_2$,  $\cR_3$ evaluated at $(\tau_p^o, p_a^o K)$ and  $(\tau_p^h, p_a^h K)$
for  M=100  and M=400.
}
\label{fig2}
\end{figure}

\section{Conclusion}
We have considered a communication scenario with massive MIMO and intermittent terminal activity.  In such a setting it is infeasible to allocate orthogonal pilots within a cell and the terminals apply random access to a small common pilot set. On the other hand, the pilot sets allocated to the neighboring sets are orthogonal. This gives a rise to operation that is free from the usual inter-cell pilot contamination and instead leads to intra-cell pilot contamination that occurs as a result of a collision of a random access process. We have provided performance expressions as well as optimization tools that are particularly important for a system where the activity of the terminals and the number of pilots have to obey certain statistical rules.

\bibliographystyle{IEEEbib}
\bibliography{strings,IEEEabrv,refs}

\end{document}